\newtheorem{theorem}{Theorem}
\newcommand{\prob}[1]{\mathbb{P}\left[ #1 \right]}
\newcommand{\const}{\ensuremath{\mathrm{const}}}
\newcommand{\erdosrenyi}{Erd\H{o}s-R\'enyi }
\newcommand{\erdosandrenyi}{Erd\H{o}s and R\'enyi}
\newcommand{\rdag}[2]{\mbox{\textsf{rdag}$(#1, #2)$}}
\newcommand{\naivecitemany}[2]{#1~et~al.~\cite{#2}}
\newcommand{\naiveciteone}[2]{#1~\cite{#2}}
\begin{document}

\title{Why Do Cascade Sizes Follow a Power-Law?}

%
%
%
%
%

\numberofauthors{4} 
%
\author{
%
%
\alignauthor
Karol W\k{e}grzycki\\
       \affaddr{Institute of Informatics, University of Warsaw, Poland}\\
       \email{k.wegrzycki@mimuw.edu.pl}
\alignauthor
Piotr Sankowski\\
       \affaddr{Institute of Informatics, University of Warsaw, Poland}\\
       \email{sank@mimuw.edu.pl}
\and
\alignauthor
Andrzej Pacuk\\
       \affaddr{Institute of Informatics, University of Warsaw, Poland}\\
       \email{apacuk@mimuw.edu.pl}
\alignauthor
Piotr Wygocki\\
       \affaddr{Institute of Informatics, University of Warsaw, Poland}\\
       \email{wygos@mimuw.edu.pl}
}

\maketitle
\begin{abstract}

    We introduce \emph{random directed acyclic graph} and use it
    to model the information diffusion network. Subsequently, we
    analyze the \emph{cascade generation model} (CGM) introduced by Leskovec et
    al.~\cite{leskovec-blog}. Until now only empirical studies of this model
    were done. In this paper, we present the first theoretical proof that the
    sizes of cascades generated by the CGM follow the power-law distribution,
    which is consistent with multiple empirical analysis of the large social
    networks. We compared the assumptions of our model with the Twitter social
    network and tested the goodness of approximation.

\end{abstract}

\keywords{Social networks; Information Diffusion; Modelling and Validation; Twitter}

\section{Introduction}

Each day billions of instant messages, comments, articles, blog posts, emails,
tweets and other various mediums of communication are exchanged in the
reciprocal, social relations. The study of the information propagation through
network is more and more demanded. Such models of propagation are used to
minimize transmission costs, enhance the security and prevent information leaks or
predict a propagation of malicious software among the users~\cite{moro}.

When considering state-of-the-art models of information diffusion, the
underlying network structure of a transmission is constructed based on the known
connections (e.g., the graph of followers in the Twitter network). Here,
we have discovered that the graph of
information dissemination has noteworthy features, unexploited in the previous
works. It has been well known that more active individuals in the network have
more acquaintances~\cite{seismic}.  We have conducted experiments
confirming those observations in the information diffusion network and showed 
its underlying structure.

Our model of the information diffusion network explains power-law (or Pareto)
distribution of the number of informed nodes (cascade size). This is a major
improvement over the state-of-the-art models, which give predictions inconsistent with
the real data~\cite{cikm2014}. Random power-law graphs may sufficiently
describe the follower-followee relation in the social network~\cite{brach}, but
these graphs may not necessarily characterize the medium of an information
propagation.  The results of our study will allow researchers to enhance their
models of the information transmission and will enable them to develop a
framework to validate these models.
\subsection{Related Work}

Up to the best of our knowledge this work presents the first theoretical analysis of the cascade
size distribution using \emph{cascade generation} to model the spread of the information in
the social networks.  The first experimental analysis has been conducted by
\naivecitemany{Leskovec}{leskovec-blog}.
They proposed a cascade generation model and simulated it on the dataset of blog
links. Since then, the research on the cascade sizes has become a fruitful
field (for more references see~\cite{rogers-diffusion}).

Through study on an epidemiology and a solid-state physics, different models
such as SIR (susceptible\--infectious\--rec\-over\-ed) or SIS
(susceptible\--infectious\--susceptible) has been employed to model the dynamics
of spread of the information. However, all of these models assume that everyone
in the population is in contact with everyone else~\cite{bayley}, which is
unrealistic in large social networks.

The classical example of the modified spreading process incorporates the effect
of a \emph{stifler}~\cite{wlosi}. \emph{Stiflers} never spread the information
even if they were exposed to it multiple times. Nevertheless, \emph{stiflers}
can actively convert other spreaders or susceptible nodes into \emph{stiflers}.
That complicated logic may lead to the elimination of the epidemic
threshold~\footnote{Epidemic threshold determines whether the global epidemic occurs or the
disease simply dies out.} and has been actively developed~\cite{brach2}.

In 2002 \naiveciteone{Watts}{watts} proposed exact solution of
the global cascade sizes on an arbitrary random graph. Notwithstanding, this
process of the information propagation called the threshold model is utterly different
from \emph{cascade generation} by \naivecitemany{Leskovec}{leskovec-blog} and does not fully
explain the dynamics of modern social networks like the Twitter.

\naivecitemany{Iribarren}{moro} have developed the similar model, where an
integro-differential equations have been introduced. That equations describe the
cascade sizes when the number of messages send by a node is described by the
Harris discrete distribution.However, the general solution to their equations is
not known and merely solutions for nontrivial cases (e.g., superexponential
processes~\cite{moro}) has been considered.  Our discoveries provide much
simpler method and lucidly explain, that the underlying social graph is far more
complex than just the graph of followers.

Results similar to ours were also obtained in the study on the bias of traceroute
sampling. \naivecitemany{Achlioptas}{traceroute1} characterize the degree
distribution of a BFS tree for a random graph with a given degree distribution.
Their explanation why the degree distribution under traceroute sampling exhibits
power-law motivated researchers to study bias in P2P systems~\cite{traceroute3}
and network discovery~\cite{traceroute4}. Their research also resulted in the
development of the new tools in the social networks sampling~\cite{traceroute2}.

In the seminal paper of~\naivecitemany{Leskovec}{leskovec-dvd}, the cascade size distribution in the
network of recommendation has been analyzed. \naivecitemany{Leskovec}{leskovec-dvd} showed that the product
purchases follow a ``long tail'' distribution, where a significant fraction of
sold items were rarely sold items. They fit the data to the power-law
distribution and discovered, that the parameters may differ for distinct networks
(remarkably, the power-law exponent was close to $-1$ for DVD recommendation
cascades).

One of the greatest issue in analyzing the cascade size distribution is the lack of
a good theoretical background for this process. 
Recently, researchers~\cite{hypertext2016} has presented new models, designed
to fit real distribution of cascade sizes. Moreover, goodness of fit test
against CGM model has been conducted. It turns out, that CGM works well for 
small cascades, however it is unreliable for the large ones. The introduction of
time dependent parameters significantly improves predictions for the large
cascades~\cite{hypertext2016}.

In future, theoretical studies on cascades size distribution could explain
phenomena in Gossip-based routing~\cite{iwanicki}, rumor virality and influence
maximization~\cite{max-influence}, recurrence of the cascades~\cite{recur} or
assist in forecasting rumors~\cite{forecast}. Currently the research in
this area is purely empirical and we need more theoretical models to understand
the process of information propagation~\cite{hypertext2016}.

\section{Modeling Information Cascades} \label{model}

Intuitively, information cascades are generated by the following process: one individual
passes the information to all its acquaintances. Then in each round newly
informed nodes randomly decide to pass it to their acquaintances. This process continues
until no new individuals are informed. The graph generated by the spread of the
information is called the cascade.

The \emph{cascade generation model} (CGM) established by~\cite{leskovec-blog}
introduces a single parameter $\alpha$ that measures how infectious a passed
information is. More precisely, $\alpha$ is the probability that the
information will be passed to the acquaintance.

According to \naivecitemany{Leskovec}{leskovec-blog}, the cascade is generated
by the following:

\begin{enumerate}
    \item Uniformly at random pick a starting point of the cascade
        and add it to the set of \emph{newly informed} nodes.
    \item Every \emph{newly informed} node independently with
        the probability $\alpha$ informs their direct neighbors.
    \item Let \emph{newly informed} be the set of nodes that has
        been informed for the first time in step $2$ and add them
        to the generated cascade.
    \item Repeat steps $2$ and $3$ until \emph{newly informed} set
        is empty.
\end{enumerate}

In this model we assume that all nodes have an identical impact ($\alpha = \const$)
on their neighbors and all generated cascades are trees, since
we pick a single, initial node. It is not a
major problem, since the most of cascades are trees~\cite{leskovec-blog}.

\begin{figure}[!ht]
    \centering
    \includegraphics[width=0.5\textwidth]{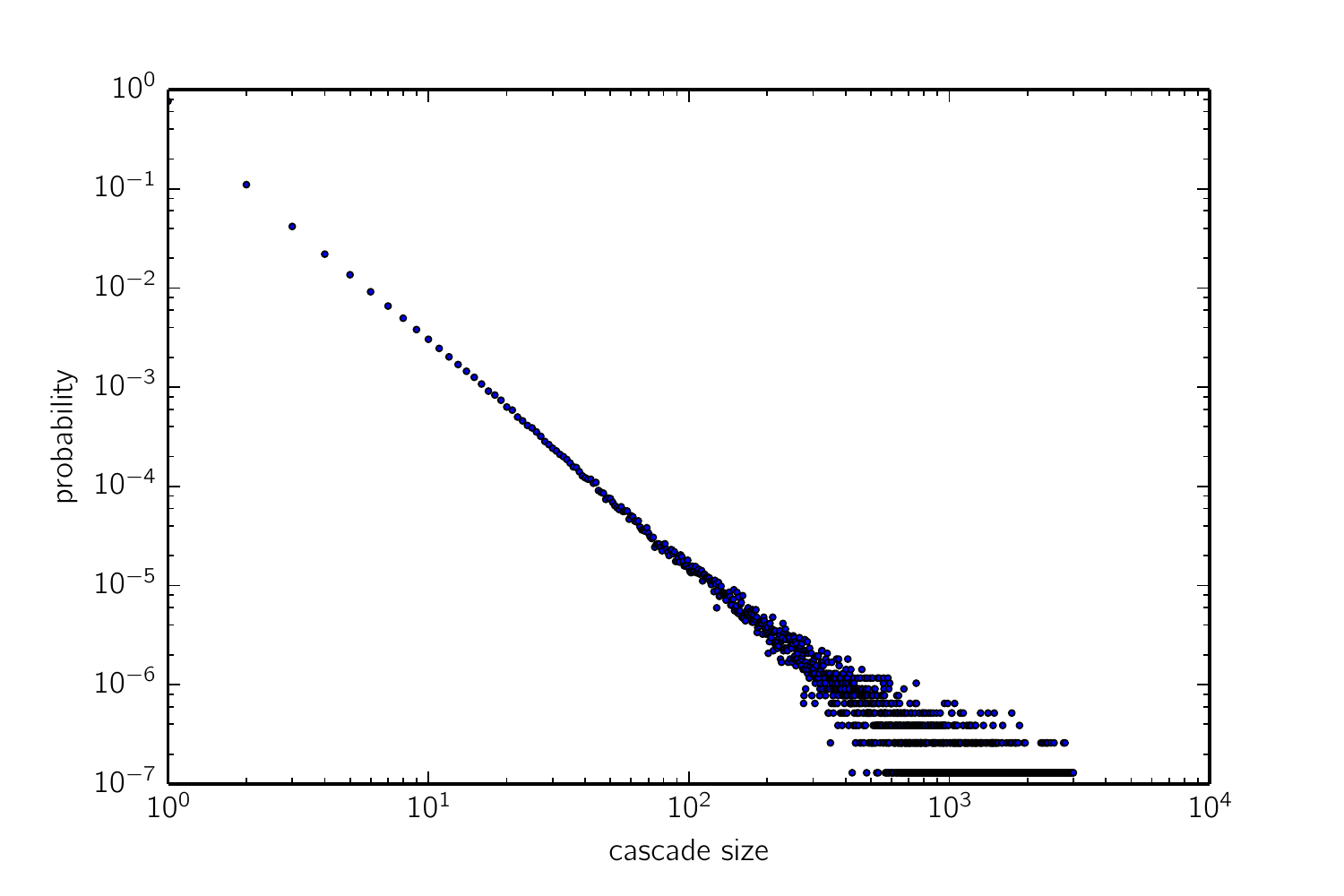}
    \caption{The log-log plot of the cascade size distribution on the Twitter dataset.
        It follows a power-law with an exponent $-2.3$.}
     \label{fig:twitter_reach_distribution}
\end{figure}

\begin{figure}[!ht]
    \centering
    \includegraphics[width=0.5\textwidth]{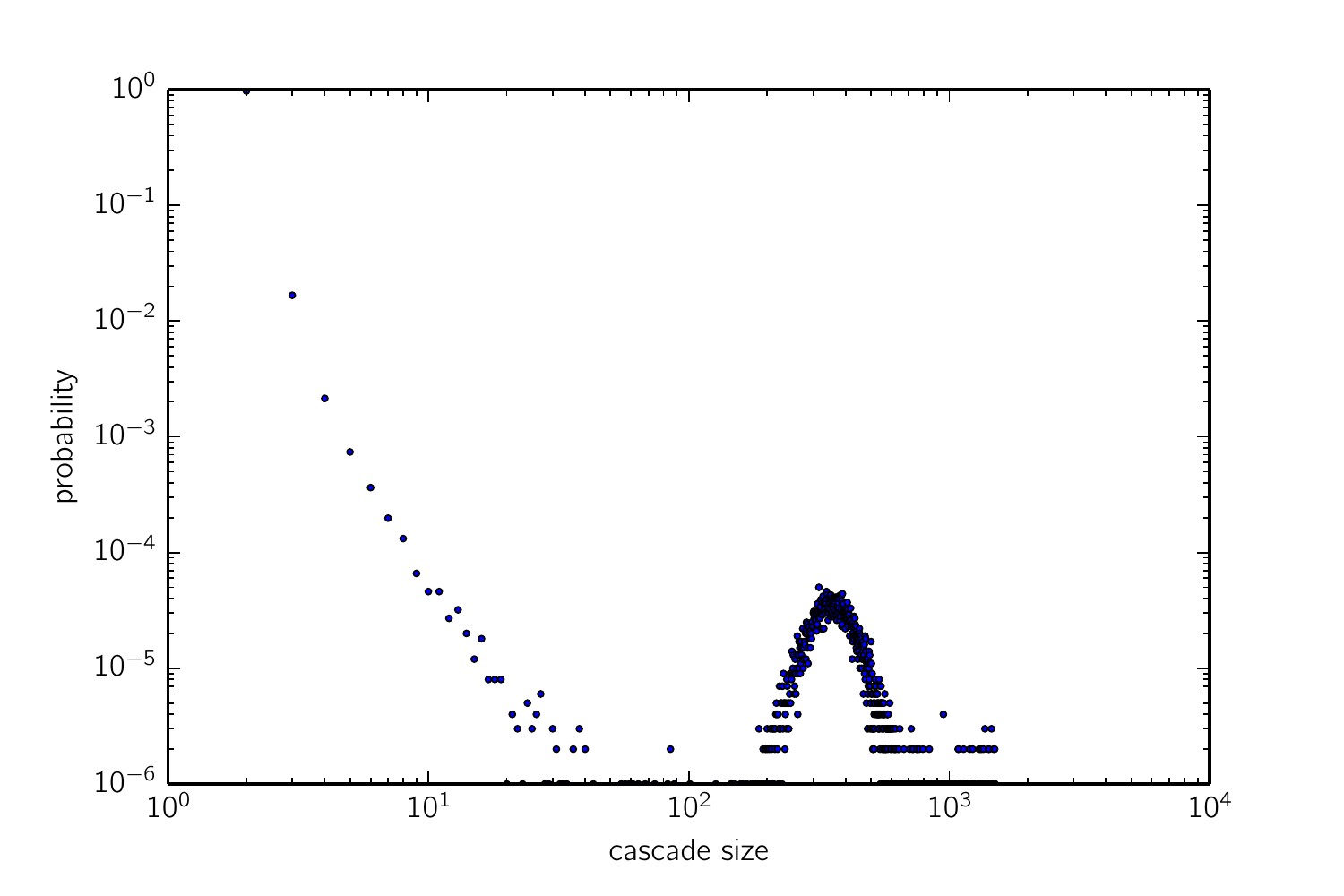}
    \caption{The log-log plot of the cascade size distribution predicted by the simulation
        of CGM using the Twitter followers graph.
        Note the phase transition, absent in real
        data.}
     \label{fig:model_const_distribution}
\end{figure}

Recently, information diffusion models have been evaluated on large social
networks and it has been observed~\cite{greg,cikm2014} that the actual cascade
size distribution is inconsistent with the distribution predicted by the model.
First, the simulation registered an obvious phase transition (the cascades are
either extremely large or are smaller than $100$) (see
Figure~\ref{fig:model_const_distribution}). No such gap has been observed in the
real data (see Figure~\ref{fig:twitter_reach_distribution}). Second, the
probability of the large cascade is intolerably high in
simulations.~\footnote{According to~\cite{cikm2014} the probability that the
cascade will have a size greater than $2\,500$ is $0.00008$ for the real data
and $0.0134$ for a simulation.}

There has been many attempts to readjust the \emph{cascade generation
model}~\cite{ghosh, cikm2014, lerman}, but as we have discussed above, those
attempts oversimplify the process or incorrectly describe the distribution of
cascades sizes. We introduce novel observations concerning the underlying social
network and based on them we introduce a theoretical model of information
propagation.

\subsection{Information Transmission} \label{information-transmission}

The underlying network of rumor spreading is unknown. Even if we would have
the network of all social contacts, the rumor could propagate through the mass
media with a random interaction or even evolve in time. Because of that, when
analyzing the social media we focus on the network of information propagation.
The same technique has been used by \naivecitemany{Leskovec}{leskovec-blog},
but their network was generated by links in blogs. To observe a non-trivial structure of
the social network we considered actions generated by replying to messages. Such
replies in the Twitter microblogging network are called \emph{retweets}. We
analyzed a set of over 500
millions tweets and the retweets from a $10\%$ sample of all tweets from May 19 to May 30 2013.
Each retweet contains identifiers of the cited and replying users. Based on that, we
generate a directed graph of the information transmission (same as~\cite{seismic,cikm2014}).
We use data published in~\cite{hypertext2016,twitter-data} and
publish our code and experimental results on~\cite{rdag-code}.

It is well known that the degree distribution of the generated graph follows a
power-law~\cite{brach}. However, this characterization does not necessarily
describe the network of the information transmission. The intuition is that the
nodes with greater degree are more active. Hence, when information spreads
through the underlying network the distribution of spreading nodes should
prefer nodes with higher degree, in consequence inflating the probability for these
nodes.

Moreover, we have observed the hierarchical structure of the graph of retweets.
The probability that a popular blogger replies to the message of an unpopular
one is extremely small.

To confirm our intuition, we have determined the distribution of
neighboring degrees for all nodes with a given degree. As shown on
Figure~\ref{fig:twitter_retweets_followers_degrees} each degree has a distinct
distribution of neighbors' degrees (implementation is available on~\cite{rdag-code}).
It means that it is more likely that a node is followed by some popular nodes when node itself is
popular. Moreover, there is a pattern:
the probability decreases with degree (for followers with degree
greater than the followed node). Based on this observation we will
model the aforementioned distribution as an approximated step function. This
observation is consistent with the state-of-the-art analysis~\cite{leskovec-blog}
and the most of cascades are ``tree-like'' or ``stars''. According to
our knowledge no further research has been conducted for examining distributions of
neighbors degrees for node with a given degree (only cumulative degree for
every cascade has been studied~\cite{leskovec-blog}).

\begin{figure}[!ht]
    \centering
    \includegraphics[width=0.5\textwidth]{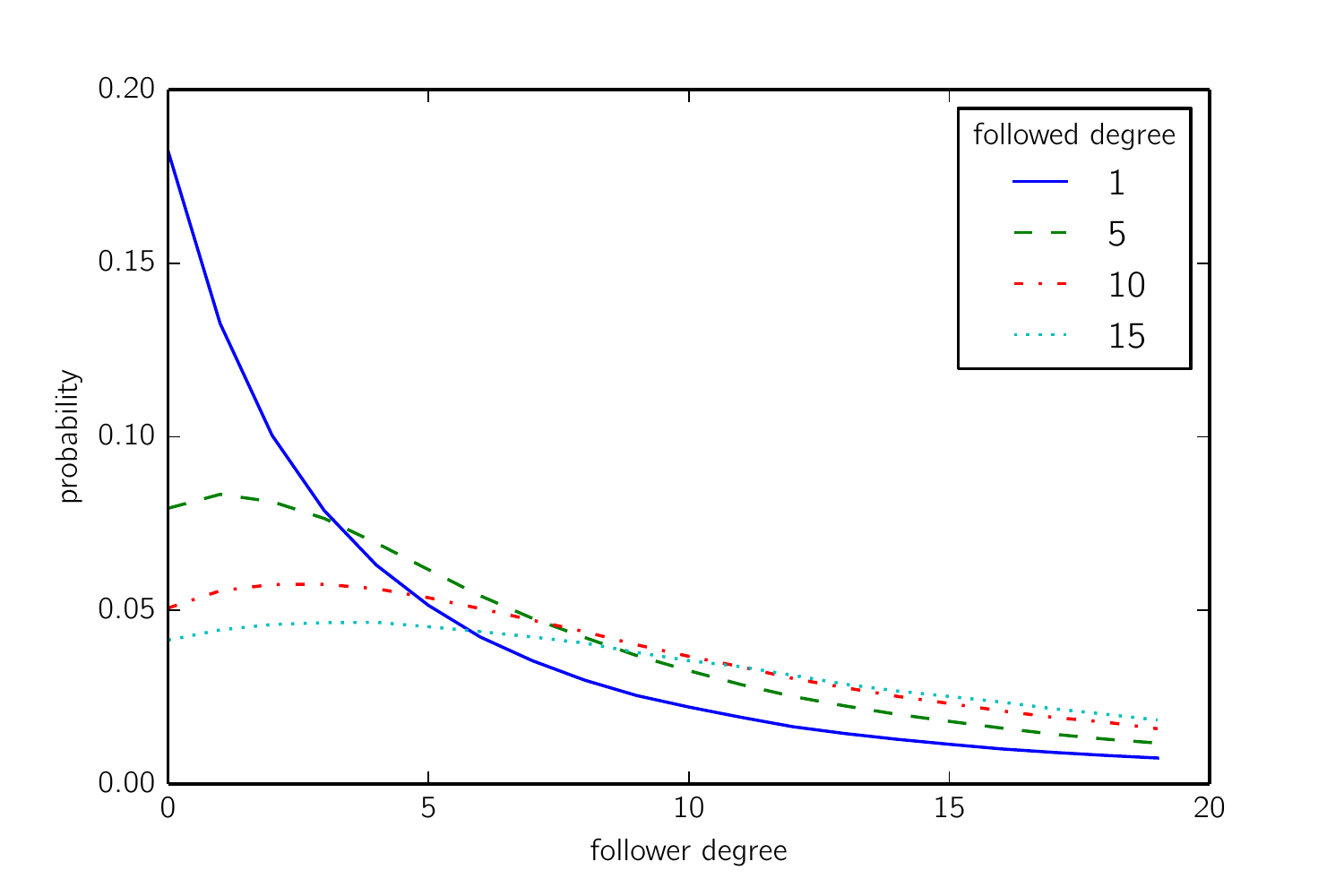}
    \caption{The degree distribution of the followers aggregated by the followed degree.
             Note the inflection point increases with a followed degree.}
    \label{fig:twitter_retweets_followers_degrees}
\end{figure}

Undoubtedly, the process of information transmission in social networks is far
more complex to be modeled by simulations just on the random power-law graphs. Because of
the hierarchical structure and the relation of activity with node degree effects, we propose
the model where the information is spread only to the nodes with lower degree
with approximately uniform distribution.

\subsection{Random DAG Generator}

One of the basic methods for generating random graphs has been introduced in
1960 by \erdosandrenyi~\cite{erdos}. In a nutshell: for a
given set of vertices, all edges have the same probability of being present or
absent in a graph. This model of the graph is not suited for modeling the social
networks because its degree distribution does not follow a power-law. The
distribution of degrees for \erdosrenyi model is binomial.

\begin{figure}[!ht]
    \centering
    \includegraphics[width=0.5\textwidth]{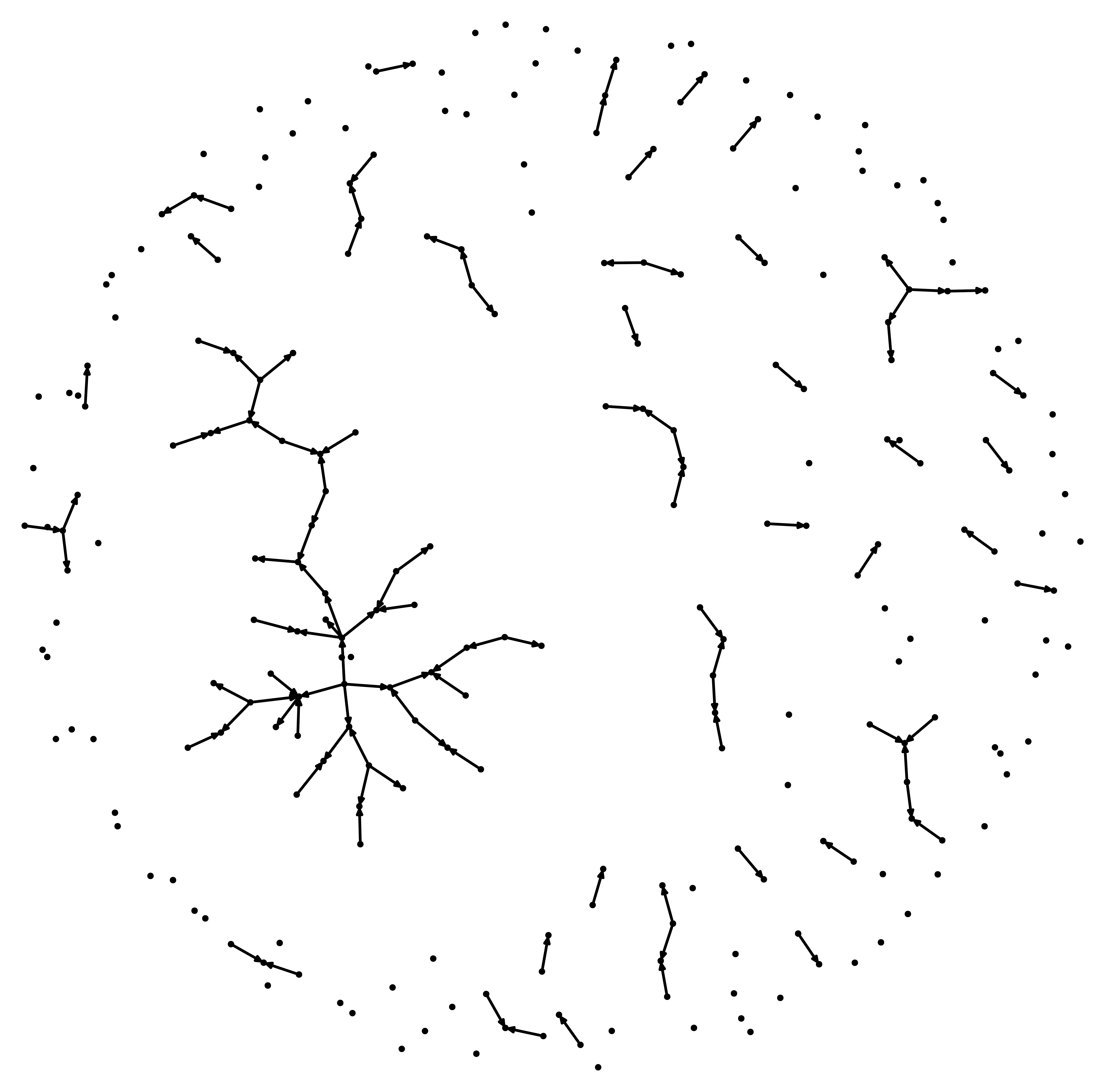}
    \caption{A diffusion network generated by the random DAG algorithm (see Algorithm~\ref{generate_dag}).}
    \label{fig:sample_graph}
\end{figure}

This model is used in the theoretical research for modeling interactions
between networks and propagation of catastrophes~\cite{catastrofic}. Even though
\erdosrenyi model does not characterize connections between nodes, we believe it
can represent the process of information spreading. We will propose the intuitive
variation of \erdosrenyi model for directed graphs.

According to \naivecitemany{Leskovec}{leskovec-blog} the cascades very rarely express cycles and
can be modeled as a tree-like structure. Even though this graphs do not model
relationships in the social network, the directed acyclic graphs (DAG) are
appropriate structure of the information propagation in the social network.

We introduce the procedure that generates the random directed acyclic graphs (DAG) and
prove that propagating information in CGM regime results in cascade sizes
obeying the power-law.

Let us denote by \rdag{n}{p} a random graph generated by the \textsc{RandomDAG($n,p$)}
 (see Algorithm~\ref{generate_dag}).

\begin{algorithm}
\caption{Generation of a random DAG}
\label{generate_dag}
\begin{algorithmic}
\Procedure{RandomDAG}{$n,p$}
\State $G \gets \text{empty graph with vertices } 1,2, \ldots, n$
\For{$i \gets 1 \text{ to } n-1$}
  \For{$j \gets i+1 \text{ to } n$}
    \State with probability $p$ add directed edge $(j,i)$ to $G$ 
  \EndFor
\EndFor
\State \textbf{return} $G$
\EndProcedure
\end{algorithmic}
\end{algorithm}

The final $n$-vertices graph is acyclic since all edges $(j,i)$ obey $j>i$ 
(see Figure~\ref{fig:sample_random_dag}).

Any DAG can be generated by the \textsc{RandomDAG($n,p$)}
when $p \in (0,1)$. If we label all vertices from $n$ to $1$ in the topological
order, the graph $G$ will consist only of edges $(j,i)$ that $j > i$.
Finally, all edges obeying $j > i$ can be present in the graph with independent
probability $p$.

\begin{figure}[!ht]
\centering
\begin{tikzpicture}[scale=1.0, every node/.style={draw,circle}]
  \node (1) at (-3, 0) { 1 };
  \node (2) at (-1.5, 0) { 2 };
  \node (3) at ( 0, 0) { 3 };
  \node (4) at ( 1.5, 0) { 4 };
  \node (5) at ( 3, 0) { 5 };

  \draw[<-, >=latex] (1) edge (2);
  \draw[<-, >=latex] (2) edge (3);
  \draw[<-, >=latex] (3) edge (4);
  \draw[<-, >=latex] (4) edge (5);

  \draw[<-, >=latex] (1) edge [bend left] (4);
  \draw[<-, >=latex] (2) edge [bend left] (4);
  \draw[<-, >=latex] (3) edge [bend left] (5);

  \draw[<-, >=stealth,dotted] (1) edge [bend right] (5);
  \draw[<-, >=stealth,dotted] (1) edge [bend right] (3);
  \draw[<-, >=stealth,dotted] (2) edge [bend right] (5);

\end{tikzpicture}
\caption{Sample DAG generated by procedure \textsc{RandomDAG($5,p$)}. Dotted edges are unchosen.}
\label{fig:sample_random_dag}
\end{figure}
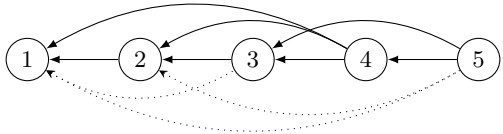

\naivecitemany{Leskovec}{leskovec-blog} suggested that in the real information
diffusion network, the small and simple graphs will occur more often than the
complex, non trivial DAGs. This is exactly the case in
\textsc{RandomDAG($n,p$)}.

\subsubsection{Degree Distribution of a Random DAG}

The distribution of in-degrees in the \rdag{n}{p} satisfies:

\begin{equation} \label{formula_rdag_degrees}
   \prob{\mathrm{indeg}(v) = k} =
   \frac{1}{n} \sum^{n-1}_{i=k} \binom{i}{k} p^k (1-p)^{i-k}
   .
\end{equation}

Similarly to the \erdosrenyi graph, the in-degree distribution of a given vertex $i$ is binomial
but with different parameters for each node. As a consequence, it leads to a 
step-function like shape (see Figure~\ref{fig:random_degree_dist}).

\begin{figure}[!ht]
    \centering
    \includegraphics[width=0.5\textwidth]{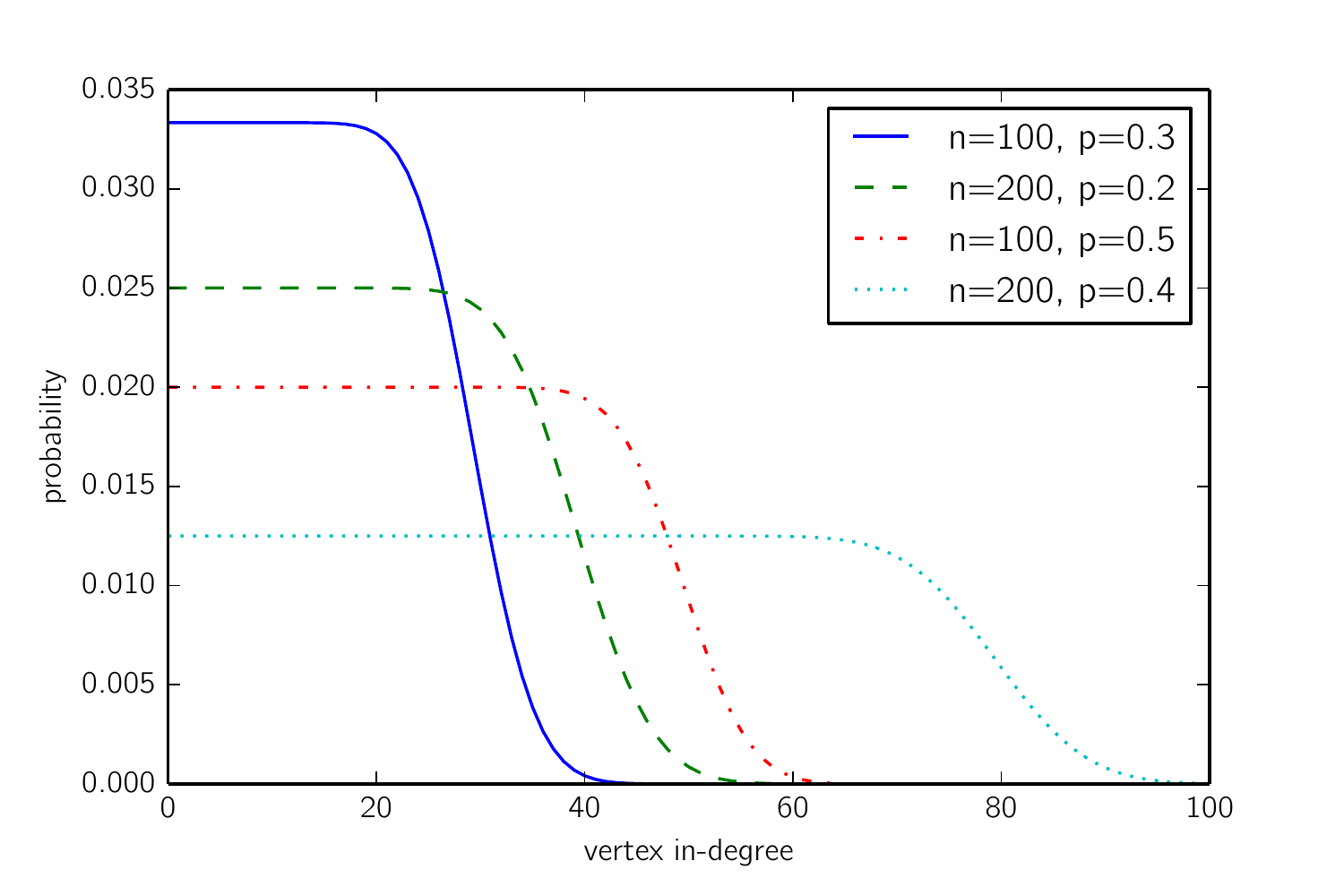}
    \caption{The in-degree distribution of the \rdag{n}{p} graphs, according
    to Formula~\eqref{formula_rdag_degrees}.}
    \label{fig:random_degree_dist}
\end{figure}

The in-degree distribution is almost uniform when the in-degree is lower than
$n p$. Naturally, this is merely a simple
approximation of the true network of a probable information transmission. Still,
based on observations in Section~\ref{information-transmission} \rdag{n}{k}
is the more accurate model than the follower-followee graph.

\subsubsection{Comparison with \erdosrenyi Model}

The random DAG, in contrast to the standard \erdosandrenyi~graph is a
directed graph. It means that it can model one-way communication and
the distribution of degrees. Next key difference 
is the hierarchical structure (i.e., the node $n$ will not follow
the node with the lower label). These differences enable the random DAG model to produce
cascades with the power-law distribution of sizes.

\section{Analysis} \label{analysis}

In this section, we will formally analyze the introduced model.
Subsequently, we will quantitatively describe a process of the information diffusion on a
random DAG by determining the cascade size distribution.

Recall $p$ to be the probability of an edge in a $\rdag{n}{p}$ and $\alpha$ to be the
average infectiousness of an information. Then set $\beta = 1-p\alpha$ for
simplicity. Hence, $\beta$ is the probability that the informed user will not
spread the information through a given edge.

Now we will determine the probability $P_{n,k}$, that the cascade reaches
$k$ distinct vertices in a graph with $n$ nodes, commencing from the vertex no. $1$.
Clearly $P_{1,1} = 1$ and $P_{n,k} = 0$ when $k > n$ or $k \le 0$. For a remaining case
assume the cascade has size $k$ and consider two distinct states of the $n$-th node:

\begin{itemize}
\item $n$ was \underline{informed}. Then, at least one of the other $k-1$ informed
    nodes had passed it to the $n$-th node, so the probability is $(1-\beta^{k-1}) \cdot P_{n-1,k-1}$.
\item $n$ was \underline{not informed}. It can happen only when none of the other $k$ informed
    nodes passed the information to $n$-th node. The probability of such event
    equals $\beta^k \cdot P_{n-1,k}$.
\end{itemize}

Hence, we obtain a formula when $1 \le k \le n$:

\begin{displaymath}
        P_{n,k}  = \beta^k \cdot P_{n-1,k} +  (1-\beta^{k-1}) \cdot P_{n-1,k-1},
\end{displaymath}


To determine the distribution of the cascade size, we assume that an information shall
commence in any node with an equal probability. Because the process of propagating
the information in $\rdag{n}{p}$ starting from node $1$ is identical to propagating it from node $i$ in
$\rdag{n+i-1}{p}$, the distribution is:

\begin{displaymath}
    \prob{|\emph{Informed}| = k} = S_{n,k}  = \frac{1}{n} \sum_{i=1}^{n} P_{i,k}.
\end{displaymath}

Now, we have the exact equation for the cascade size distribution. This equation
does not have a simple form. However, we can ask what will happen when the number of
nodes in graph is large.
Let us recall that two series $x_n, y_n$ are \emph{asymptotically equivalent}
when:
$$x_n \sim y_n \text{ iff } \lim_{n \to \infty}\frac{x_n}{y_n} = 1.$$

The cascade size distribution satisfies Theorem~\ref{asymptotic-cascade}.

\begin{theorem}
    \begin{displaymath} \label{formula_s_nk}
        S_{n,k} \sim \frac{1}{n(1-\beta^k)}
    \end{displaymath}
    \label{asymptotic-cascade}
\end{theorem}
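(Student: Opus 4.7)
The plan is to switch from $S_{n,k}$ to the partial sums
\[
Q_{n,k} \;:=\; \sum_{i=1}^{n} P_{i,k} \;=\; n\,S_{n,k},
\]
so the claim $S_{n,k} \sim 1/(n(1-\beta^k))$ becomes equivalent to proving that $Q_{n,k} \to Q_k := 1/(1-\beta^k)$ as $n \to \infty$ (for each fixed $k$). Summing the given recurrence for $P_{n,k}$ over $i = 2, \ldots, n$ telescopes to
\[
Q_{n,k} \;=\; \beta^k\, Q_{n-1,k} \,+\, (1-\beta^{k-1})\, Q_{n-1,k-1}
\]
for every $k \ge 2$. The base case $k=1$ is settled directly: $P_{n,1} = \beta^{n-1}$ gives $Q_{n,1} = (1-\beta^n)/(1-\beta) \to 1/(1-\beta) = Q_1$.

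Next I would prove by induction on $k$ that $Q_{n,k} \to Q_k$. Monotonicity in $n$ is free since $Q_{n,k} - Q_{n-1,k} = P_{n,k} \ge 0$. For the uniform upper bound $Q_{n,k} \le Q_k$, I would run a secondary induction on $n$ inside the outer $k$-induction: assuming $Q_{n-1,k} \le Q_k$ and, by the outer hypothesis, $Q_{n-1,k-1} \le Q_{k-1}$, the recurrence above yields
\[
Q_{n,k} \;\le\; \beta^k Q_k + (1-\beta^{k-1}) Q_{k-1} \;=\; \beta^k Q_k + (1-\beta^k) Q_k \;=\; Q_k,
\]
where the middle equality is the algebraic identity $(1-\beta^{k-1}) Q_{k-1} = (1-\beta^k) Q_k$ that is built into the definition of $Q_k$. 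Being monotone and bounded, $Q_{n,k}$ has a limit $L_k$; passing to the limit in the recurrence gives $(1-\beta^k) L_k = (1-\beta^{k-1}) Q_{k-1}$, forcing $L_k = Q_k$.

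Once $Q_{n,k} \to Q_k$ is established, the asymptotic equivalence is immediate, since
\[
\frac{S_{n,k}}{1/(n(1-\beta^k))} \;=\; (1-\beta^k)\, Q_{n,k} \;\longrightarrow\; 1.
\]
The only delicate point is the uniform (in $n$) upper bound on $Q_{n,k}$: a naive attempt to bound $Q_{n,k}$ by plugging the limit $Q_{k-1}$ into the recurrence before having convergence for level $k-1$ does not close the induction. Coupling the $n$- and $k$-inductions and exploiting the identity satisfied by the proposed limit is what makes the bound collapse cleanly; everything else in the proof is routine.
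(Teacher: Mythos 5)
Your proposal is correct and follows essentially the same route as the paper: pass to $n S_{n,k}$, telescope the recurrence for $P_{n,k}$, establish the limit by monotone boundedness, and identify it by taking limits in the recurrence. The only difference is that you spell out the nested induction giving the uniform bound $Q_{n,k} \le Q_k$ (via the identity $(1-\beta^{k-1})Q_{k-1} = (1-\beta^k)Q_k$), a step the paper dismisses as ``technical induction.''
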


\begin{proof}
Let us denote $\widetilde{S}_{n,k} = n S_{n,k}$. We need to prove
that:
\begin{equation}\label{proveit}
    A_k := \lim_{n \to \infty} \widetilde{S}_{n,k} = \frac{1}{1-\beta^k}
    .
\end{equation}

We will prove it by induction.
For $k=1$:

\begin{eqnarray*}
    \widetilde{S}_{n,1} & = & \sum_{i=1}^n P_{i,1} = P_{1,1} + \sum^{n-1}_{i=1} P_{i+1,1} 
         =  1 + \sum^{n-1}_{i=1} \beta P_{i,1} \\ 
        & = &  1 + \beta \widetilde{S}_{n-1,1} = 1 + \beta + \beta^2 + \ldots +
        \beta^{n-1}
    .
\end{eqnarray*}

Hence, $\widetilde{S}_{n,1}$ is the sum of the geometric series:

\begin{displaymath}
    \widetilde{S}_{n,1} = \frac{1-\beta^n}{1-\beta} \to \frac{1}{1-\beta}
    .
\end{displaymath}

For $k > 1$:
\begin{displaymath}
     \widetilde{S}_{n,k} = \sum_{i=1}^n P_{i,k} = \sum_{i=1}^n \beta^k P_{i-1,k} + \sum_{i=1}^n (1-\beta^{k-1}) P_{i-1,k-1}
     .
\end{displaymath}

So $\widetilde{S}_{n,k}$ obeys the recursive formula:

\begin{equation} \label{eq1}
    \widetilde{S}_{n,k} =  \beta^k \widetilde{S}_{n-1,k} + (1-\beta^{k-1})  \widetilde{S}_{n-1,k-1}
    .
\end{equation}

Technical induction shows, that the $\widetilde{S}_{n,k}$ is bounded and
increasing in respect to $n$, so  $A_k = \lim_{n
\to \infty} \widetilde{S}_{n,k}$ exists. Hence, we can take a limit on both
sides of Equation~\ref{eq1} and obtain:

\begin{displaymath}
    A_k = \beta^k A_k + (1-\beta^{k-1}) A_{k-1}
    ,
\end{displaymath}

\begin{equation}\label{Ak_recursive}
    A_k = \frac{1-\beta^{k-1}}{1-\beta^k} A_{k-1}
    .
\end{equation}

Finally, by unwinding the recursive Formula~\eqref{Ak_recursive},
for each $k>0$ we obtain (recall that $A_1 = \frac{1}{1-\beta}$):
\begin{displaymath}
    A_k = \frac{1-\beta^{k-1}}{1-\beta^k}
          \frac{1-\beta^{k-2}}{1-\beta^{k-1}}
          \cdots \frac{1}{1-\beta}
        = \frac{1}{1-\beta^k}.
\end{displaymath}

Hence, we have proved an asymptotic Formula~\eqref{proveit} of the cascade size distribution.
\end{proof}

\subsection{Approximation for the Large Networks}

Recall, that $\beta = 1 - p\alpha = 1 - \epsilon$. Because $p$ and $\alpha$ are extremely
small, $\beta$ is close to 1. Taking the Laurent series of
our function we get:

\begin{displaymath}
    \frac{1}{1-(1-\epsilon)^k} = \frac{1}{k\epsilon} + \frac{k-1}{2k} + O(\epsilon)
    .
\end{displaymath}

The social networks have an extremely large
number of nodes (e.g., the Twitter network has about $300$ millions distinct
users~\cite{brach}). On the other hand, new information reaches only few
nodes (the cascade size distribution is believed to be a power-law for $k$ smaller
than $10\,000$)~\cite{cikm2014,leskovec-blog}.  Then, because the element $\frac{k-1}{2 k n}$
is insignificant when $n$ is that large, for $k \ll \frac{1}{p\alpha} \ll n$ we
get:

\begin{displaymath}
    S_{n,k} \sim \frac{1}{n(1-(1-p\alpha)^k)} = \frac{1}{k n p \alpha} +
    \frac{k-1}{2k n} + O(\frac{p \alpha}{n})
\end{displaymath}

Hence, the distribution of cascade size:

\begin{equation}\label{exact-formula}
    \prob{|\mathrm{Informed}| = k} \approx \frac{1}{n p \alpha } k^{-1} + \mathrm{const}
\end{equation}

in the first-order perturbation follows the power-law. Due to the low
number of the large cascades, the distribution of sizes is unknown for $k$ close to
$\frac{1}{p\alpha}$. In that case, one should use an exact formula.

\subsection{Goodness of the Approximation}

On the Figure~\ref{fig:analysis} we have presented a comparison between
approximation and exact formula for the distribution of cascade sizes. For a relatively small cascade size
$k$ the slope of a distribution matches ideally.

\begin{figure}[!ht]
    \centering
    \includegraphics[width=0.5\textwidth]{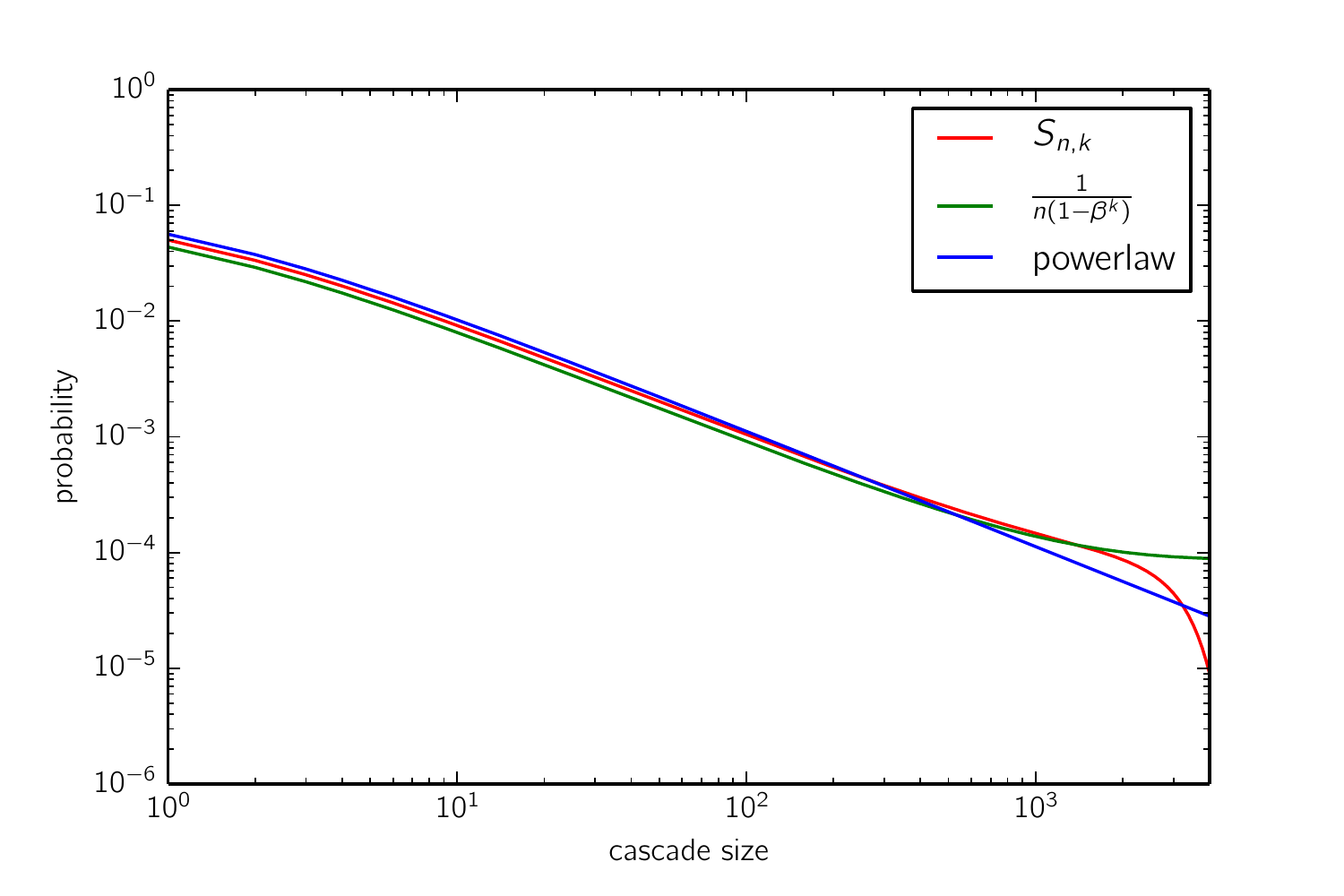}
    \caption{The log-log plot of an exact formula $S_{n,k}$,
        asymptotic bound $\frac{1}{n(1-\beta^k)}$ and the power-law distribution
        with exponent $-1$.
        }
     \label{fig:analysis}
\end{figure}

How large cascades can we model using the aforementioned assumptions? The number of
nodes $n$ in the Twitter network is approximately 300 millions. In~\cite{seismic}
the average infectiousness $\alpha$ of the information on Twitter is said to be
of order of $0.01$.
According to \naivecitemany{Leskovec}{leskovec-blog}, the number of edges in a cascade
is proportional to $n^{1.03}$. So the parameter $p$ should be
approximately $p \propto \frac{n^{1.03}}{n^2}$, since the number of possible
edges is $n^2$. The largest rumor in our dataset has roughly $70\,000$
informed nodes, hence the lower bound for parameter $p$ is of order of
$\frac{(7\cdot10^4)^{1.03}}{(7\cdot10^4)^2} \approx 2\cdot10^{-5}$.

Still, $\frac{1}{\alpha p} \approx 5 \cdot 10^6 \ll 3 \cdot 10^8 \approx n$.
So, for the Twitter network we can model the cascades with sizes $k \ll 5 \cdot 10^6$.
Remarkably, it is enough, since approximately $10^{-5}$ of Twitter rumors have the size
greater than $10^4$.

\subsubsection{Goodness of Fit}

K-S test comparing the power-law distribution and the real cascade size
distribution~\cite{twitter-data} is $0.0145$. At~\cite{hypertext2016}, the the
K-S test comparing the cascade size distribution on the variant of CGM with real
cascade distribution is $0.0447$. It means, that our model can potentially
improve the test value by $3\%$ in comparison to the CGM on the graph of
retweets.

\subsection{Other Schemes of Information Diffusion}

One would state that \textit{cascade generation model} is counterintuitive for
microblogging services such as the Twitter. In fact, it is more intuitive that
every follower of the spreader eventually will be informed. So every follower, after being informed for the first time
ought to make exactly one decision (with probability $\alpha$):
whether to pass the information to all of its acquaintances simultaneously (previously
the information was passed to each of its followers independently
and each follower may had multiple opportunities to become a spreader). In such
a case, the Formula~\eqref{exact-formula} is exactly the same (for further proof see
Appendix~\ref{model-many}).

\section{Conclusion and Future Work} \label{conclusion}

The graph of the information diffusion is utterly different from the global network
of social connections. In contrast to multiple previous approaches, we model the
cascade of information propagation as the \textit{random directed acyclic graph} and
we show that in the scheme of CGM the distribution of information popularity is asymptotically equivalent to:

\begin{displaymath}
    \prob{|\textit{Informed}| = k} \sim \frac{1}{n(1-\beta^k)}
    ,
\end{displaymath}

where $n$ is the number of nodes and $\beta$ is a parameter dependent on both
infectiousness of average information and density of the cascade. We show that for
a sufficiently big number of nodes this distribution follows the power-law,
 what is consistent with real world observations. We hope that an introduction of this framework will inspire
the theoretical affords to model and describe the information diffusion in
the large social networks. The analysis of the \rdag{n}{p} graph showed that
the cascade size distribution follows the power-law $\prob{|Informed|=k} \propto
k^\gamma$ for $\gamma=-1$. \naivecitemany{Leskovec}{leskovec-dvd} presented the rich family of data with power-law exponent
close to $-1$. \naivecitemany{Leskovec}{leskovec-dvd} suggest that the information
propagation in the network of recommendations may produce cascades with desired
exponent. 
However, in the real data, the parameter $\gamma$ can be completely different (e.g.,
see Figure~\ref{fig:twitter_reach_distribution} of the Twitter cascade size
distribution with power-law exponent $\gamma=-2.3$).
To adopt our \rdag{n}{p} model to those cases, one can customize a distribution
of random cascades (we assumed a fairly simple method to generate
them).
We believe that adaptation of cascade shape distribution
to the experimental data will readjust the $\gamma$ parameter (for a start one
can use the shape distribution provided by~\naivecitemany{Leskovec}{leskovec-blog}). Further 
enhancements might also be achieved by adapting the information diffusion scheme to
a particular society (similarly to Appendix~\ref{model-many}).
To encourage other researchers to apply our model in practice, we publish the
code used to generate all figures and the results on~\cite{rdag-code}.

Still, we need more experiments and research to answer what type of social networks the
random directed acyclic graphs model and how richer set of features (e.g.,
spatio-temporal features) influences the cascade size distribution. 

\section{Acknowledgments}

This work was partially supported by NCN grant UMO-2014/13/B/ST6/01811, ERC
project PAAl-POC 680912, ERC project TOTAL 677651, FET IP project MULTIPLEX
317532 and polish funds for years 2013-2016 for co-financed international
projects.

\bibliographystyle{abbrv}
\bibliography{sigproc} 

\appendix

\section{Dependant Passing of the Information}
\label{model-many}
In this model we count spreaders who pass the information to all of theirs followers with probability $\alpha$.
These followers will receive the information and might become new spreaders.

For clarity, we will use the notation from Section~\ref{analysis} and the proof of Theorem~\ref{formula_s_nk}.

Analogously to previous analysis, we obtain the recursive formula:
\begin{eqnarray*}
    P_{1,1} &=& \alpha, \\
    P_{n,k} &=& 0, \text{ when } k > n, \\
    P_{n,k} &=& \big(1-(1-\beta^k)\alpha\big) \cdot P_{n-1,k} + \alpha (1 - \beta^{k-1})
    \cdot P_{n-1,k-1} .
\end{eqnarray*}

Rest of the proof is almost identical to the proof of Theorem~\ref{formula_s_nk}.

For $k=1$, we have:
\begin{displaymath}
    \lim_{n \to \infty} \widetilde{S}_{n,1} = 
    \lim_{n \to \infty} \frac{1-\big(1-(1-\beta^k)\alpha\big)^n}{1-\beta} =
    \frac{1}{1-\beta}
    .
\end{displaymath}

For $k>1$, when $n \to \infty$:
\begin{eqnarray*}
    \lim_{n \to \infty} \widetilde{S}_{n,k} = A_k
    & = & \big(1-(1-\beta^k)\alpha\big)
          A_k + \\
    &   & + \alpha (1-\beta^{k-1}) A_{k-1}
    .
\end{eqnarray*}

By subtracting the expression on both sides:

\begin{displaymath} 
    A_k - A_k (1 - (1-\beta^k)\alpha) = A_{k-1} \alpha (1-\beta^{k-1})
    .
\end{displaymath}

And after simplification we get:

\begin{displaymath} 
    A_k (1-\beta^k)\alpha = A_{k-1} (1-\beta^{k-1}) \alpha
\end{displaymath}

Hence, we have obtained the same formula as in Theorem~\ref{formula_s_nk}:
\begin{displaymath} 
    A_k  = A_{k-1} \frac{1-\beta^k}{1-\beta^{k-1}}
\end{displaymath}

and finally obtain:
\begin{displaymath}
    \lim_{n\to\infty} \widetilde{S}_{n,k} = \frac{1}{1-\beta^k}
    .
\end{displaymath}

\end{document}